\newtheorem{theorem}{Theorem}[section]
\newcommand{\den}{\varepsilon}
\newcommand{\Rdm}{\mathbb{R}^2_{+}}
\newcommand{\Ldt}{{^{(3)}\Delta}}
\newcommand{\Ldta}{{^{(7)}}\Delta}
\newcommand{\ck}{\mathcal{L}}
\newcommand{\po}{d}
\newcommand{\mt}{\bar{m}_2}
\title{Bounds for the metric and shift vector on extreme Kerr under linear axially symmetric gravitational perturbations}
\author{Ivan Gentile de Austria\\
 Facultad de Ciencias Exactas y Naturales, \\
  Universidad Nacional de Cuyo, CONICET,\\
  Mendoza, Argentina.}
\begin{document}

\sloppy

\maketitle

\begin{abstract}
 We present integral estimates for the shift and induced metric in the maximal isothermal gauge, corresponding to linear axially 
 symmetric gravitational perturbations over the Minkowski and extreme Kerr metric. These bounds are a step towards finding pointwise bounds for the 
 perturbations. In particular the presented estimate for the shift includes the horizon for the extreme Kerr case.  
\end{abstract}

\section{Introduction}
\label{intro}

The stability of the Kerr black hole is one of the most fundamental open problems in classical general relativity, a lot of work has been devoted to solving it. 
The first approach was 
the modal stability proved by Whiting \cite{Whiting89} using the Teukolsky equation \cite{teukolsky1972rotating}. However,
modal stability is not enough to exclude that general linear perturbations grow
unbounded in time (see, for example, the discussion in \cite{wald:1056} and
\cite{dafermos2010black}). A second approach is the study of the wave equation over the Kerr metric, this was extensively done by Dafermos and Rodnianski, they proved the stability 
of the solution of the wave equation on the Kerr metric, see \cite{dafermos2010decay}, \cite{dafermos2014decay} and \cite{dafermos2014scattering}. But the non-modal 
stability for gravitational perturbations over the Kerr metric is still an open problem.

Within the Kerr family a relevant case is the extreme Kerr metric as it lies on the boundary between black holes and
naked singularities. It is expected that studying it would shed light on the
cosmic censorship conjecture. In \cite{Aretakis:2011gz} and \cite{aretakis2013nonlinear} Aretakis proved that there exists a type of instabilities for the derivatives 
transversal to the horizon for 
the scalar wave under extreme Kerr black hole and general extreme black holes respectively. 
 There are also stability results for the extreme Reissner-Nordström black hole as 
 the ones shown in \cite{Aretakis:2011ha}.

The study of stability for the Kerr metric under gravitational perturbations is very hard. A first approach is to deal with the linear problem and a second approximation 
is simplifying the 
problem imposing axial symmetry on the perturbation, that is what we have done in our previous 
works \cite{Dain:2014iba} and \cite{dain2015bounds}, and we continue now in the present article.  

For gravitational perturbations with axial symmetry a reduction of the manifold can be performed in which the coordinate system $(t,\,\rho,\,z)$ 
is specified by the maximal isothermal gauge (see \cite{Dain:2014iba}). 
The domain for the spatial coordinates $(\rho,z)$ is the half-plane $\Bbb R_{+}^{2}$ defined by $0<\rho<\infty$ and $-\infty<z<\infty$, 
with flat metric $\delta_{AB}$ and line element given by $\delta=d\rho^{2}+dz^{2}$. The axis of symmetry is given by $\rho=0$, and  the horizon is 
located at $r=0$, being
$r=\sqrt{\rho^{2}+z^{2}}$. In this case a four dimensional solution of the Einstein equations $(M,\, g_{\alpha\beta})$ corresponds to a three 
dimensional manifold $(N,\, h_{ab})$ being $N$ the quotient of $M$ with respect to the trajectories of the axial Killing vector $\eta^{\mu}$. 
Then $N$ is foliated by two dimensional slices of constant time coordinate, the slices have metric $q_{AB}=e^{2u}\delta_{AB}$ and are maximal, 
$\chi=0$. The degrees of freedom of the gravitational field are included in the norm of the Killing vector $\eta$ and the corresponding twist potential 
$\omega$. Due to the singular behavior at the axis $\rho=0$ it is better to work with the functions $\sigma$ and $q$, defined by $\eta=\rho^{2}e^{\sigma}$ and 
$u=\mbox{log}\,\rho+\sigma+q$.

As explained in \cite{Dain:2014iba}, the perturbations with axial symmetry can be characterized by two functions $\sigma_{1}$ and $\omega_{1}$, which 
represent the first order perturbations of the norm and the twist of the axial Killing vector. These functions are equally valid to describe the 
system of equations and perturbation as the functions $q_{1}$ and $\beta_{1}^{A}$, 
which represent the first order perturbation of the induced metric and shift vector respectively.
In \cite{Dain:2014iba} it was shown that in the presence of axial gravitational perturbations there are positive conserved energies, using these energies 
in \cite{dain2015bounds} we prove the existence of integral bounds for the first and second derivative of the perturbation. Said bounds do not control the 
perturbation on the horizon, which is where the equations diverge. In this article we will use both
results to give integral bounds for the first order of the shift vector and the induced metric. The results are obtained first
on Minkowski and then on extreme Kerr (Theorem (\ref{teorema4}) and Theorem (\ref{teorema5}) respectively).

There are several reasons for wanting to get bounds for $q_{1}$ and $\beta_{1}$. First, as a matter of completeness since they are equally valid 
functions to describe the system. Also, $q_{1}$ and $\beta_{1}$ are more usual variables for numerical purposes such as those developed in  
\cite{Dain:2009wi}. Our main reason can be described as follows. 
With the ambition to study the behavior of the perturbation at the horizon, we want to analyze the possibility of using in extreme Kerr 
the mechanism of \textit{time integration} developed in \cite{Dain:2012qw}, in which a pointwise bound is obtained for
the solution of the wave equation in the extreme Reissner-Nordström background including the horizon. For this, as it is shown in \cite{Dain:2012qw}, 
it is essential to give an integral estimate of the square of the temporal derivative of the wave, in our context this translates to the need to give an integral 
estimate of the squares of $\dot{\sigma}_{1}$ and $\dot{\omega}_{1}$. As we will see, to have a chance of controlling $\dot{\sigma}_{1}$ and $\dot{\omega}_{1}$ 
we need a bound for the shift vector, actually we need a bound for 
$\int_{\Bbb R_{+}^{2}}|\beta_{1}|^{2}\rho\, d\rho\, dz$. Furthermore, the bounds obtained in the present article for $q_{1}$ and $\beta_{1}$ differ from 
those obtained for $\sigma_{1}$ and $\omega_{1}$ in \cite{dain2015bounds}, 
which can shed light on the behavior of perturbations near the horizon.

The article is structured as follows. In sections \ref{ecuaciones_mink} and \ref{ecuaciones_kerr}, we present the  equations, definitions and identities that we will use 
in this article, 
for a complete exposition of them see 
\cite{Dain:2014iba} and \cite{dain2015bounds}. The background quantities are indicated with a subscript $0$ while those corresponding to the 
first order of the perturbation with a subscript $1$. Then, in section \ref{estimaciones} we present and prove the results for Minkowski and extreme Kerr 
in subsections \ref{estimaciones_mink} and \ref{estimacion_kerr} respectively.
Finally, in section \ref{conclusiones}, we comment and discuss the results.

\section{Minkowski Equations}
\label{ecuaciones_mink}

In this section we give a summary of the linearized equations and background quantities for axially symmetric perturbations 
of the Minkowski spacetime. We do this because we first obtain the results in Minkowski 
in section \ref{estimaciones_mink} as a matter of consistency with previous works and because it serves as a test case.

The background functions are
\begin{equation}
  \label{eq77}
  \omega_{0} = q_{0}= \sigma_{0}=0,\quad 
 u_{0}=\ln\rho,\quad  \eta_{0} = \rho^{2}, \quad \alpha_{0}=\rho.
\end{equation}
The evolution equations for $\sigma_{1}$ and $\omega_{1}$ are

\begin{equation}
 \label{eq79}
 - \dot p + \Ldt\sigma_{1} =0,\\
\end{equation}

\begin{equation} 
 \label{eq80}
 -\ddot{\omega}_{1} +\Ldt \omega_{1}=4\dfrac{\partial_{\rho}\omega_{1}}{\rho},
\end{equation}
where

\begin{equation}
 \label{eq80a}
 \Ldt=\Delta+\dfrac{\partial_{\rho}}{\rho},
\end{equation}

\begin{equation}
 \label{eq81}
 p=\dot{\sigma}_{1}-\dfrac{2\beta_{1}^{\rho}}{\rho}.  
\end{equation} 
The evolution equations for the metric and extrinsic curvature are

\begin{equation}
\label{eq82} 
 2\dot{u}_{1}=\partial_{A}\beta^{A}_{1}+2\dfrac{\beta^{\rho}_{1}}{\rho},\\
\end{equation}

\begin{equation} 
 \label{eq83}
 \dot{\chi}_{1AB} =2\partial_{(A}q_1\partial_{B)}\rho-\delta_{AB}\partial_{\rho}q_1,
\end{equation}
where the dot means derivative with respect to $t$.

The momentum and Hamiltonian constraint equations are

\begin{equation}
 \label{eq84}
 \partial^{A}\chi_{1AB} =-p\partial_{A}\rho, \\
\end{equation}

\begin{equation} 
 \label{eq85}
 \Delta q_{1}+^{(3)}\Delta\sigma_{1} =0.
\end{equation}
The gauge equations are

\begin{equation}
 \label{eq86}
 \alpha_{1}=0,
\end{equation}

\begin{equation}
 \label{eq86a}
 (\ck\beta_{1})^{AB}=\dfrac{2}{\rho} \chi_{1}^{AB}, 
\end{equation}
where 
\begin{equation}
 \label{ck}
 \ck\beta_{AB}=\partial_{A}\beta_{B}+\partial_{B}\beta_{A}-\delta_{AB}\partial_{C}\beta^{C}.
\end{equation}
For the energy densities and mases we found 

\begin{equation}
  \label{eq87}
  \den_0=\den_1=0,\quad
   m_0=m_1=0,
\end{equation}

\begin{equation}
\label{eq88}
 \den_{2}=\left(2p^{2}+2\dfrac{\dot
     {\omega}_{1}^{2}}{\rho^{4}}+2\arrowvert\partial\sigma_{1}\arrowvert^{2}+2\dfrac{\arrowvert\partial 
   \omega_{1}\arrowvert^{2}}{\rho^{4}}+4\dfrac
 {\chi_{1}^{AB}\chi_{1AB}}{\rho^{2}}\right) \rho, 
\end{equation}

\begin{equation}
 \label{masa_mink_seg_ord}
 m_{2}=\int_{\Bbb R_{+}^{2}}\den_{2}d\rho dz,
\end{equation}
where $m_{0}$, $m_{1}$ and $m_{2}$ are the zero, first and second order ADM mass, and their corresponding densities $\den_{0}$, $\den_{1}$ and $\den_{2}$ respectively.

\section{Extreme Kerr Equations}
\label{ecuaciones_kerr}

Now we give a summary of the linearized equations and background quantities for the extreme Kerr metric. These equations will be used in section \ref{estimacion_kerr}.

For the background quantities
\begin{equation}
   \label{eq128}
   \alpha_0=\rho,
 \end{equation}
and the explicit expressions for $q_{0},\,\sigma_{0},\,\omega_{0}$ are complicated and can be seen in the appendix of \cite{Dain:2014iba}. 
These functions depend on two parameters, the mass 
$m_{0}$ and the angular momentum per unit of mass $a$ of the black hole. The relevant properties are

\begin{equation}
  \label{eq129}
  \frac{|\partial \omega_0|^{2}}{\eta^2_0} \leq \frac{C}{r^{2}}, 
\end{equation}

\begin{equation}
 \label{eq129a}
 |\partial \sigma_0|^2 \leq \frac{C}{r^{2}},
\end{equation}

\begin{equation}
 \label{eq129a1}
 |\partial q_{1}|^{2}\leq C
\end{equation}

\begin{equation}
 \label{eq:combi}
  e^{-2 (\sigma_0+q_0)} \frac{|\partial \omega_0|^{2}}{\eta^2_0}  \leq C, 
\end{equation}

\begin{equation}
\label{eq:combi2}
e^{-2(\sigma_{0}+q_{0})}|\partial\sigma_{0}|^{2}\leq C,
\end{equation}
where the positive constant $C$ depends only on $m_{0}$. 
The evolution equations for $\sigma_{1}$ and $\omega_{1}$ are

\begin{align}
 \label{eq131}
 -\frac{e^{2u_{0}}}{\rho^{2}}\dot{p}+^{(3)}\Delta\sigma_{1} &=\dfrac{2}{\eta^2_0}\left(\sigma_{1}|\partial\omega_{0}|^{2}-\partial_{A}
 \omega_{1}\partial^{A}\omega_{0} \right),\\
 \label{eq131a}
 -\frac{e^{2u_{0}}}{\rho^{2}}\dot{\po}+^{(3)}\Delta\omega_{1} &=4\dfrac{\partial_{\rho}\omega_{1}}{\rho}+
 2\partial_{A}\omega_{1}\partial^{A}\sigma_{0}+2\partial_{A}\omega_{0}\partial^{A}\sigma_{1},
\end{align}
where we define

\begin{align}
 \label{eq132}
 p &=\dot{\sigma}_{1}-2\dfrac{\beta_{1}^{\rho}}{\rho}-\beta_{1}^{A}\partial_{A}\sigma_{0},\\
 \label{eq132a}
 \po &=\dot{\omega}_{1}-\beta_{1}^{A}\partial_{A}\omega_{0}.
\end{align}

Unlike in Minkowski in this case no significant simplifications occur in the evolution equations for the metric and the second fundamental 
form with respect to the general ones shown in \cite{Dain:2014iba}, in addition we will not use these equations, therefore we do not present them here.

The momentum and Hamiltonian constraint equations are given by
\begin{align}
 \label{eq133}
 \partial^{B}\chi_{1AB} &=-\dfrac{e^{2u_{0}}}{2\rho}\left(p\left(2\dfrac{\partial_{A}\rho}{\rho}
 +\partial_{A}\sigma_{0}\right)
+\dfrac{\partial_{A}\omega_{0}}{\eta_{0}^{2}}\po\right),\\
 \label{eq133a}
 ^{(3)}\Delta\sigma_{1}+\Delta q_{1} &=- \frac{\den_1}{4\rho },
\end{align}
where $\den_1$ is given by (\ref{eq136a}).

The gauge equations are

\begin{equation}
 \label{eq86}
 \alpha_{1}=0
\end{equation}

\begin{equation}
 \label{eq135}
 \left(\ck\beta_{1}\right)^{AB}=2e^{-2u_{0}}\rho \chi_{1}^{AB},
\end{equation}
while for the energy densities we get

\begin{equation}
 \label{eq136}
 \den_{0}=\left(|\partial\sigma_{0}|^{2}+\dfrac{|\partial\omega_{0}|^{2}}{\eta_{0}^{2}}\right)\rho,
\end{equation}

\begin{equation}
 \label{eq136a}
 \den_1=\left( 2\partial_{A}\sigma_{0}\partial^{A}\sigma_{1} +\frac{2\partial_{A}\omega_{0}\partial^{A}\omega_{1}}{\eta_{0}^{2}}-\frac{2\sigma_{1}|\partial\omega_{0}|^2}{\eta_{0}^{2}}\right)\rho,
\end{equation}

\begin{multline}
 \label{eq136b} 
 \den_2=\left(2\dfrac{e^{2u_{0}}}{\rho^{2}}p^{2}+ 2\dfrac{e^{2u_{0}}}{\rho^{2}\eta_{0}^{2}}\po^{2} 
 +4e^{-2u_{0}} \chi^{AB}_{1}\chi_{1AB} +\right. \\
 +\left(\partial\sigma_{1}+\omega_{1}\eta^{-2}_{0}\partial\omega_{0}\right)^{2} +\left(\partial\left(\omega_{1}\eta_{0}^{-1}\right)-\eta_{0}^{-1}\sigma_{1}\partial\omega_{0}\right)^{2}+\\
 \left. +\left(\eta_{0}^{-1}\sigma_{1}\partial\omega_{0}-\omega_{1}\eta_{0}^{-2}\partial\eta_{0}\right)^{2}\right)\rho, 
\end{multline}
with the corresponding masses

\begin{equation}
 \label{eq136c}
 m_{0}=\int_{\Bbb R_{+}^{2}}\den_{0}d\rho dz,
\end{equation}

\begin{equation}
 \label{eq136c1}
 m_{1}=\int_{\Bbb R_{+}^{2}}\den_{1}d\rho dz,
\end{equation}

\begin{equation}
 \label{eq136c2}
 m_{2}=\int_{\Bbb R_{+}^{2}}\den_{2}d\rho dz.
\end{equation}
As shown in \cite{Dain:2014iba}, due to the background being stationary, taking the time derivative in all equations (both in Minkowski and in extreme Kerr) we get again the same set of equations
but now for the time derivative of the perturbation quantities ($\dot{\sigma}_{1},\; 
\dot{\omega}_{1},\;\dot{q}_{1},\;\dot{\chi}_{1AB},$ etc.), so we have the corresponding conserved masses ($\bar{m}_{0},\;\bar{m}_{1},\;
\bar{m_{2}}$) which are obtained directly by replacing in $m_{0},\; m_{1},\; m_{2}$ the quantities for their time derivative.     

In this article as in \cite{Dain:2014iba} and \cite{dain2015bounds} the perturbation satisfies the usual decay conditions for asymptotically flat spacetimes 
with a cylindrical end 
(see \cite{Dain:2014iba}).

We will also make use of the following theorems, proved in \cite{dain2015bounds}.
\begin{theorem}
 \label{theorem3}
Linear gravitational perturbations with axial symmetry over extreme Kerr satisfy the following inequalities
\begin{align}
 \label{cota1}
 & \int_{ \Rdm}\left(\dfrac{1}{2}\eta_{0}^{2}|\partial\bar{\omega}_{1}|^{2}+|\partial\eta_{0}|^{2}\bar{\omega}_{1}^{2}+  |\partial\sigma_{1}|^{2}+\dfrac{\sigma_{1}^{2}}{r^{2}}\right)\rho d\rho dz\leq  C m_{2} , \\
 \label{cota2}
 & \int_{\Rdm}\left( \left(\Ldt\sigma_{1}\right)^{2}+\eta_{0}^{2}\left(\Ldta\bar{\omega}_{1}\right)^{2} \right)e^{-2(q_{0}+\sigma_{0})} \rho d\rho dz\leq  C \left(\mt+m_{2} \right),
 \end{align}
where $C$ is a constant that depends only on the extreme Kerr mass $m_{0}$, and $\bar{\omega}_{1}$ is defined as

\begin{equation}
 \label{eq_omega_barra}
 \bar \omega_1 =\frac{\omega_1}{\eta_0^{2}}.
\end{equation}

\end{theorem}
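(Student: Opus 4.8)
The plan is to obtain both inequalities from the two conserved second--order masses, $m_2=\int_{\Rdm}\den_2\,d\rho\,dz$ and its time--derived companion $\mt=\int_{\Rdm}\bar\den_2\,d\rho\,dz$, using in an essential way that the density $\den_2$ in \eqref{eq136b} is already written as a sum of squares and that the background quantities obey \eqref{eq129}--\eqref{eq:combi2}.

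For \eqref{cota1} I would work with $\bar\omega_1=\omega_1\eta_0^{-2}$ as in \eqref{eq_omega_barra} and rewrite the last three squared blocks of \eqref{eq136b} in terms of $\partial\sigma_1$, $\eta_0^{-1}\partial\omega_1$, $\bar\omega_1\partial\eta_0$, $\bar\omega_1\partial\omega_0$ and $\eta_0^{-1}\sigma_1\partial\omega_0$. The first gain is immediate: a short computation identifies $\eta_0\,\partial\bar\omega_1$ with the sum of the arguments of the last two squared blocks, so by the triangle inequality $\tfrac12\eta_0^2|\partial\bar\omega_1|^2\le\den_2/\rho$ pointwise. To recover $|\partial\sigma_1|^2$ I would use $|a|^2\le 2|a+b|^2+2|b|^2$ on the first block, obtaining $|\partial\sigma_1|^2\le 2(\partial\sigma_1+\bar\omega_1\partial\omega_0)^2+2\bar\omega_1^2|\partial\omega_0|^2$, and then absorb $\bar\omega_1^2|\partial\omega_0|^2\le C\,\eta_0^2\bar\omega_1^2\,r^{-2}$ (via \eqref{eq129}) by the weighted Hardy inequality $\int_{\Rdm}f^2r^{-2}\rho\,d\rho\,dz\le 4\int_{\Rdm}|\partial f|^2\rho\,d\rho\,dz$ applied to $f=\eta_0\bar\omega_1=\omega_1/\eta_0$, legitimate because the perturbation satisfies the asymptotically flat plus cylindrical--end decay and vanishes on the axis. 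The weighted term $\sigma_1^2/r^2$ is likewise produced by the same Hardy inequality applied to $\sigma_1$ together with the bound on $|\partial\sigma_1|^2$, while $|\partial\eta_0|^2\bar\omega_1^2$ is controlled by combining all three blocks with \eqref{eq129}. Adding everything, the integral of $\den_2$ dominates the integral of the bracket in \eqref{cota1} with a constant depending only on $m_0$.

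For \eqref{cota2} I would first use that the extreme Kerr background is stationary: differentiating \eqref{eq131}--\eqref{eq131a} in $t$ reproduces the same system for $(\dot\sigma_1,\dot\omega_1,\dot p,\dot{\po})$, so $\mt$ is conserved and finite. Next I would solve \eqref{eq131} for $\Ldt\sigma_1$, and rewrite \eqref{eq131a} as an equation for $\bar\omega_1$ (which is where $\Ldta$ enters) and solve it for $\eta_0\Ldta\bar\omega_1$: in each case the right--hand side equals $e^{2u_0}\rho^{-2}$ (respectively $\eta_0^{-1}e^{2u_0}\rho^{-2}$) times a time derivative $\dot p$ (respectively $\dot{\po}$), plus terms linear in $\sigma_1$ and $\partial\omega_1$ with background coefficients. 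Squaring, multiplying by $e^{-2(q_0+\sigma_0)}\rho$ and using $e^{2u_0}=\rho^2e^{2(\sigma_0+q_0)}$, the leading contribution becomes $e^{2(\sigma_0+q_0)}\dot p^2\rho$ (respectively $\eta_0^{-2}e^{2(\sigma_0+q_0)}\dot{\po}^2\rho$), which is exactly one half of the corresponding term of $\bar\den_2$ and therefore integrates to $\le C\mt$. The remaining quadratic terms come multiplied by $e^{-2(q_0+\sigma_0)}|\partial\omega_0|^2/\eta_0^2$ or $e^{-2(q_0+\sigma_0)}|\partial\sigma_0|^2$, both bounded by constants by \eqref{eq:combi}--\eqref{eq:combi2}, so what is left is $\int_{\Rdm}(\sigma_1^2/r^2+\eta_0^{-2}|\partial\omega_1|^2)\rho\,d\rho\,dz$, which \eqref{cota1} bounds by $Cm_2$ after noting $\eta_0^{-2}|\partial\omega_1|^2\le 2\eta_0^2|\partial\bar\omega_1|^2+8|\partial\eta_0|^2\bar\omega_1^2$. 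A final Young splitting of the cross terms collects everything into $C(\mt+m_2)$.

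The delicate point throughout is the horizon $r=0$, where both the weight $e^{-2(q_0+\sigma_0)}$ and the factor $r^{-2}$ diverge; the argument closes only because these divergences are matched, in \eqref{cota2} through the combined background bounds \eqref{eq:combi}--\eqref{eq:combi2} and in \eqref{cota1} through the fact that the genuinely controlled quantity is $\sigma_1^2/r^2$ rather than $\sigma_1^2$, i.e.\ through a Hardy inequality tuned to the cylindrical end. Correspondingly, the main obstacle is to organise the square completions and the applications of \eqref{eq129}--\eqref{eq129a1} so that every error term is strictly absorbed rather than left at the borderline of the Hardy constant --- this is where the precise sum--of--squares form of $\den_2$, and not merely its positivity, has to be exploited.
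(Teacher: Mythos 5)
First, note that the paper does not prove Theorem \ref{theorem3} at all: it is imported verbatim from \cite{dain2015bounds} (``We will also make use of the following theorems, proved in \cite{dain2015bounds}''), so there is no in-paper proof to compare against and your attempt has to be judged on its own terms. Your treatment of \eqref{cota2} is essentially the right reduction and mirrors how the present paper later proves Theorem \ref{teorema5}: solve \eqref{eq131}--\eqref{eq131a} for the elliptic operators, use $e^{2u_0}=\rho^2 e^{2(\sigma_0+q_0)}$ so that the $\dot p$ and $\dot \po$ terms reproduce the first two blocks of $\bar{\varepsilon}_2$, kill the background coefficients with \eqref{eq:combi}--\eqref{eq:combi2} and \eqref{eq129}, and feed the remaining lower-order terms into \eqref{cota1}. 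The only detail you leave unverified there is the identity converting the $^{(3)}\Delta\omega_1$ equation into an equation for $\eta_0^2\,{}^{(7)}\Delta\bar\omega_1$, which is a computation one must actually do, but the structure is correct. Your observation that the arguments of the last two squared blocks of \eqref{eq136b} sum to $\eta_0\partial\bar\omega_1$, so that \eqref{eq:desigualdaddecauchy1} with $n=2$ yields the $\tfrac12\eta_0^2|\partial\bar\omega_1|^2$ term of \eqref{cota1} pointwise, is also correct and explains the factor $\tfrac12$.

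The genuine gap is in the rest of \eqref{cota1}. Your scheme for extracting $|\partial\sigma_1|^2$, $\sigma_1^2/r^2$ and $|\partial\eta_0|^2\bar\omega_1^2$ is circular: $|\partial\sigma_1|^2$ costs an error $\bar\omega_1^2|\partial\omega_0|^2\leq C\,\eta_0^2\bar\omega_1^2/r^2$; Hardy applied to $\omega_1/\eta_0$ converts that into $\eta_0^2|\partial\bar\omega_1|^2$ (controlled) plus $\bar\omega_1^2|\partial\eta_0|^2$; the third block converts the latter into an error $C\,\sigma_1^2/r^2$; Hardy applied to $\sigma_1$ converts that back into $C'|\partial\sigma_1|^2$. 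Closing this loop requires the product of the Hardy constants with the constants $C$ of \eqref{eq129} to be strictly less than $1$, and nothing in \eqref{eq129}--\eqref{eq:combi2} gives smallness --- those constants depend on $m_0$ and are order one. You flag this as ``the main obstacle'' but absorbing it is not a matter of bookkeeping: the coercivity of $m_2$ over the quantity in \eqref{cota1} is the actual mathematical content of the theorem, and in \cite{Dain:2014iba} and \cite{dain2015bounds} it rests on the variational characterization of extreme Kerr coming from the mass--angular-momentum inequality (a sharp weighted Poincar\'e-type inequality adapted to the harmonic-map energy), not on generic Cauchy--Schwarz plus Hardy absorption. As written, your argument for \eqref{cota1} does not close, and since \eqref{cota2} relies on \eqref{cota1} for its lower-order terms, the gap propagates to the whole proof.
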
  
We will also use the generalized Cauchy inequality
\begin{equation}
 \label{eq:desigualdaddecauchy1}
 a_{1}^{2}+a_{2}^{2}+...\,+a_{n}^{2}\geq\dfrac{1}{n}\left(a_{1}+a_{2}+...\,+a_{n}\right)^{2}.
\end{equation}

\section{Estimates for $q$ and $\beta^{A}$} 
\label{estimaciones}
In this section we present and prove our results. First for Minkowski as background and then for extreme Kerr. 
We first obtain the results in Minkowski as a matter of consistency with previous works. In addition, as we shall see, the result in Minkowski is broader than in the 
case of extreme Kerr.  

As we mentioned in the introduction, the motivation to obtain bounds for the shift follows from the necessity to give bounds for the time derivative of the norm of the axial Killing 
vector and the corresponding twist potential. This can be seen from equations (\ref{eq132}) and (\ref{eq132a}) for extreme Kerr (or equation (\ref{eq81}) for Minkowski), from which 
follows the inequalities 
\begin{equation}
 \label{int_1}
 \int_{\Bbb R_{+}^{2}}|\dot{\sigma}_{1}|^{2}e^{2(\sigma_{0}+q_{0})}\rho d\rho dz\leq m_{2}+
 \int_{\Bbb R_{+}^{2}}\left(\dfrac{|\beta_{1}^{\rho}|^{2}}{\rho^{2}}+|\beta_{1}^{A}|^{2}|\partial\sigma_{0}|^{2}\right)e^{2(\sigma_{0}+q_{0})}\rho d\rho dz
\end{equation}
and
\begin{equation}
 \label{int_2}
 \int_{\Bbb R_{+}^{2}}|\dot{\omega}_{1}|^{2}\dfrac{e^{2(\sigma_{0}+q_{0})}}{\eta_{0}^{2}}\rho d\rho dz\leq m_{2}+
 \int_{\Bbb R_{+}^{2}}|\beta_{1}^{A}|^{2}\dfrac{e^{2(\sigma_{0}+q_{0})}}{\eta_{0}^{2}}|\partial\omega_{0}|^{2}\rho d\rho dz.
\end{equation}
Integrand terms different from $\beta_{1}$ are bounded on $\Bbb R_{+}^{2}$ so they can be taken off the integral, then we need to control 
$\int_{\Bbb R_{+}^{2}}|\beta_{1}|^{2}\rho d\rho dz$.

\subsection{Minkowski}
\label{estimaciones_mink}

\begin{theorem}
 \label{teorema4}
For linear gravitational perturbations with axial symmetry on Minkowski, the following inequalities hold 
\begin{equation}
 \label{cota5}
 \int_{\Bbb R_{+}^{2}}\dfrac{|\partial q_{1}|^{2}}{\rho}d\rho dz\leq C\bar{m}_{2},
\end{equation}

\begin{equation}
 \label{cota6}
 \int_{\Bbb R_{+}^{2}}|\Delta q_{1}|^{2} \rho d\rho dz\leq C\bar{m}_{2},  
\end{equation}

\begin{equation}
 \label{cota7}
 \int_{\Bbb R_{+}^{2}}|\Delta\beta_{1}^{B}|^{2} \rho^{3} d\rho dz\leq Cm_{2},
\end{equation}
being $C>0$ a numerical constant. 

Moreover, (\ref{cota5}) and (\ref{cota6}) imply the following estimate for the second derivatives of $q_{1}$

\begin{equation}
 \label{cota8a}
\int_{\Bbb R_{+}^{2}}|\partial^{2}q_{1}|^{2}\rho d\rho dz\leq C\bar{m}_{2}, 
\end{equation}
where $|\partial^{2}q_{1}|^{2}=\partial^{A}\partial_{B}q_{1}\partial^{B}\partial_{A}q_{1}$.
\end{theorem}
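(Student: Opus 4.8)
The plan is to derive the three integral bounds \eqref{cota5}, \eqref{cota6}, \eqref{cota7} essentially from the Hamiltonian constraint \eqref{eq85}, the gauge equation \eqref{eq86a}, and Theorem \ref{theorem3} (in its Minkowski specialization, where $\eta_0=\rho^2$, $\sigma_0=q_0=0$), and then to obtain \eqref{cota8a} from \eqref{cota5} and \eqref{cota6} by an elliptic (integration-by-parts) identity. For \eqref{cota6}: the Hamiltonian constraint \eqref{eq85} gives $\Delta q_1 = -\,^{(3)}\Delta\sigma_1$, so $\int |\Delta q_1|^2\rho\,d\rho\,dz = \int |^{(3)}\Delta\sigma_1|^2\rho\,d\rho\,dz$, which is controlled by the time-derivative version of \eqref{cota2} specialized to Minkowski (where $e^{-2(q_0+\sigma_0)}=1$), i.e. by $C\bar m_2$. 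For \eqref{cota5}: integrating $\int \rho^{-1}|\partial q_1|^2\,d\rho\,dz$ by parts moves one derivative off $q_1$; the boundary terms at $\rho=0$ (axis), $\rho\to\infty$ and $|z|\to\infty$ should vanish by the decay/regularity conditions quoted from \cite{Dain:2014iba}, leaving $-\int q_1\,\partial_A(\rho^{-1}\partial^A q_1)\,\rho\,d\rho\,dz$, which after expanding the divergence produces $\Delta q_1$ terms and is estimated by Cauchy–Schwarz together with \eqref{cota6} and a weighted Poincaré/Hardy inequality controlling $\int q_1^2 r^{-2}\rho\,d\rho\,dz$ (or a $\sigma_1$-analogue of it via \eqref{eq85}); the $\sigma_1^2/r^2$ term already appearing in \eqref{cota1} is the relevant source of such a Hardy bound.

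For \eqref{cota7} the route is the gauge equation \eqref{eq86a}, $(\mathcal{L}\beta_1)^{AB} = \tfrac{2}{\rho}\chi_1^{AB}$. I would first note $\int \rho^{-2}\chi_1^{AB}\chi_{1AB}\,\rho\,d\rho\,dz \le m_2$ directly from the last term of $\den_2$ in \eqref{eq88}. The operator $\mathcal{L}$ defined in \eqref{ck} is (a constant multiple of) the flat conformal Killing / vector Laplacian in two dimensions; on the relevant function space $\mathcal{L}^*\mathcal{L}$ is coercive, so $\int |\Delta \beta_1^B|^2\,w\,d\rho\,dz$ with the appropriate weight $w$ is controlled by $\int |(\mathcal{L}\beta_1)^{AB}|^2\,w\,d\rho\,dz$ up to lower-order terms and boundary contributions; taking one further derivative of \eqref{eq86a} and pairing with the $\rho^3$ weight in \eqref{cota7} (which is exactly what makes the $\rho^{-1}$ factor in $(\mathcal{L}\beta_1)^{AB}=\tfrac2\rho\chi_1^{AB}$ combine with $\chi_1$'s natural $\rho^{-1}$ weight to land on the $\chi_1^{AB}\chi_{1AB}/\rho^2$ density) gives the bound by $Cm_2$. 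The decay conditions are again needed to kill boundary terms in this integration by parts.

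Finally, \eqref{cota8a}: the standard Bochner-type identity for the flat Laplacian in the upper half-plane gives $\int |\partial^2 q_1|^2\,\rho\,d\rho\,dz = \int |\Delta q_1|^2\,\rho\,d\rho\,dz + (\text{boundary terms})$; with the measure $\rho\,d\rho\,dz$ the cross term integration by parts produces curvature-like corrections from the nontrivial weight $\rho$, and in fact since $\rho\,d\rho\,dz$ is the flat $3$-dimensional axisymmetric volume element the identity is exactly $\int|\partial^2 q_1|^2\,dv_0 = \int(\Delta q_1)^2\,dv_0$ plus terms that can be reabsorbed using $\int \rho^{-1}|\partial q_1|^2\,d\rho\,dz$ from \eqref{cota5}; hence \eqref{cota8a} follows from \eqref{cota5} and \eqref{cota6} with a numerical constant. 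The main obstacle I anticipate is not any single estimate but the careful bookkeeping of the weighted boundary terms at the axis $\rho=0$ and at the horizon $r=0$ in each integration by parts: one must check that the quoted asymptotic/regularity conditions from \cite{Dain:2014iba} are strong enough that every boundary contribution genuinely vanishes, since the weights ($\rho^{-1}$, $\rho$, $\rho^3$) are all singular or degenerate there.
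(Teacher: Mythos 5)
Your plan for (\ref{cota8a}) essentially matches the paper's argument (the integration-by-parts identity giving $\int|\partial^2 q_1|^2\rho\,d\rho\,dz=\int|\Delta q_1|^2\rho\,d\rho\,dz+2\int\Delta q_1\,\partial_\rho q_1\,d\rho\,dz$, with the cross term handled by Cauchy--Schwarz against (\ref{cota5}) and (\ref{cota6})), but each of the three primary estimates has a genuine gap. The most serious is (\ref{cota5}): the paper does not integrate by parts at all; it observes from the evolution equation (\ref{eq83}) that $\dot{\chi}_{1AB}\dot{\chi}_1^{AB}=2|\partial q_1|^2$, so $|\partial q_1|^2/\rho$ sits, up to a numerical factor, inside the density defining $\bar m_2$, and the bound is immediate. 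Your route produces, after expanding $\partial_A(\rho^{-1}\partial^A q_1)$, the terms $-\int q_1\rho^{-1}\Delta q_1\,d\rho\,dz$ and $\int q_1\rho^{-2}\partial_\rho q_1\,d\rho\,dz$; estimating these against (\ref{cota6}) by Cauchy--Schwarz requires control of $\int q_1^2\rho^{-3}\,d\rho\,dz$, which is much more singular at the axis than the $\sigma_1^2 r^{-2}\rho$ term of (\ref{cota1}), is not supplied by any stated estimate (and (\ref{cota1}) concerns $\sigma_1$, not $q_1$), and whose natural Hardy-type derivation would be in terms of $\int\rho^{-1}|\partial_\rho q_1|^2\,d\rho\,dz$ --- precisely the quantity you are trying to bound. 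That route does not close.

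For (\ref{cota7}) you are missing the key ingredient: the momentum constraint (\ref{eq84}). Taking the divergence of (\ref{eq86a}) gives exactly $\Delta\beta_1^B=\tfrac{2}{\rho}\partial_A\chi_1^{AB}-\tfrac{2}{\rho^2}\chi_1^{\rho B}$, and the term $\partial_A\chi_1^{AB}$ is \emph{not} controlled by $\int\chi_1^{AB}\chi_{1AB}\rho^{-1}\,d\rho\,dz\leq Cm_2$, nor by any coercivity of $\mathcal{L}^{*}\mathcal{L}$: coercivity of $\mathcal{L}$ controls first derivatives of $\beta_1$ by $\mathcal{L}\beta_1$, whereas $\Delta\beta_1$ requires a derivative of $\chi_1/\rho$. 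The paper substitutes the constraint $\partial^A\chi_{1AB}=-p\,\partial_B\rho$, and $p^2\rho$ appears directly in $\varepsilon_2$; without that substitution the argument fails. Finally, for (\ref{cota6}) your appeal to Theorem \ref{theorem3} is off target: that theorem is stated for extreme Kerr, and even formally transplanted to Minkowski it yields $C(\bar m_2+m_2)$ rather than the asserted $C\bar m_2$. The paper instead combines (\ref{eq79}) with (\ref{eq85}) to get $\Delta q_1=-\dot p$, and $\dot p^{\,2}\rho$ is again read off directly from the density of $\bar m_2$. In short, the theorem is proved by identifying each integrand, via the evolution and constraint equations, as a term already present in the energy densities; the elliptic-coercivity and Hardy-inequality machinery you propose is both unnecessary and, as set up, insufficient.
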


\begin{proof}
We start with the estimate (\ref{cota5}). From the extrinsic curvature equation (\ref{eq83}), we have
\begin{equation}
 \label{2}
 \dot{\chi}_{1AB}\dot{\chi}_{1}^{AB}=2|\partial q_{1}|^{2},
\end{equation}
therefore multiplying by the factor $\dfrac{1}{\rho}$ and integrating on $\Bbb R_{+}^{2}$, it follows from (\ref{masa_mink_seg_ord}) that
\begin{equation}
 \int_{\Bbb R_{+}^{2}}\dfrac{|\partial q_{1}|^{2}}{\rho}d\rho dz\leq C\bar{m}_{2}.
\end{equation}
From (\ref{eq79}) and (\ref{eq85}) we have

\begin{equation}
 \label{3}
\Delta q_{1}=-\dot{p},
\end{equation}
squaring this equation, integrating on $\Bbb R_{+}^{2}$ and using the definition (\ref{masa_mink_seg_ord}) 
the bound (\ref{cota6}) is obtained. 

The estimate (\ref{cota7}) which contains the Laplacian of $\beta_{1}^{B}$ is obtained as follows. Taking the divergence of (\ref{eq86a}) we get 

\begin{align}
 \label{4}
\partial_{A}\left(\ck\beta_{1}\right)^{AB}&=\partial_{A}\left(\dfrac{2}{\rho}\chi_{1}^{AB}\right),\\ 
\label{5}
\Delta\beta_{1}^{B}&=\dfrac{2}{\rho}\left(\partial_{A}\chi_{1}^{AB}-\dfrac{\chi_{1}^{\rho B}}{\rho}\right),\\
\label{6}
\Delta\beta_{1}^{B}&=-\dfrac{2}{\rho}\left(p\partial^{B}\rho+\dfrac{\chi_{1}^{\rho B}}{\rho}\right),
\end{align}
where in the line (\ref{5}) we have used the operator $\ck$ given by (\ref{ck}), in order to get the equation (\ref{6}) we made use of the
constrain equation (\ref{eq84}). Multiplying (\ref{6}) by $\rho$ we can rewrite it as

\begin{equation}
 \label{7}
\Delta\beta_{1}^{B}\rho=-2\left(p\partial^{B}\rho+\dfrac{\chi_{1}^{\rho B}}{\rho}\right).
\end{equation}
Taking the square of this equation and using the Cauchy inequality we have

\begin{align}
 \label{8}
|\Delta\beta_{1}^{B}|^{2}\rho^{2}&=2\left(p\partial^{B}\rho+\dfrac{\chi_{1}^{\rho B}}{\rho}\right)^{2}\\ 
\label{8a}
&\leq4\left(p^{2}+\dfrac{\chi_{1}^{AB}\chi_{1AB}}{\rho^{2}}\right),
\end{align}
then, integrating over $\Bbb R_{+}^{2}$ we see from (\ref{masa_mink_seg_ord}) that the estimate (\ref{cota7}) holds.

Finally, the estimate (\ref{cota8a}) comes from the relationship

\begin{multline}
 \label{9}
|\Delta q_{1}|^{2}\rho=\partial^{A}\partial_{A}q_{1}\partial^{B}\partial_{B}q_{1}\rho=\partial^{A}(\partial_{A}q_{1}\partial^{B}\partial_{B}q_{1}\rho)
-\partial^{B}(\partial_{A}q_{1}\partial^{A}\partial_{B}q_{1}\rho)\\
+\partial^{A}(\partial_{A}q_{1}\partial_{B}q_{1}\partial^{B}\rho)
+ \partial^{A}\partial_{B}q_{1}\partial^{B}\partial_{A}q_{1}\rho
-2\partial^{A}\partial_{A}q_{1}\partial_{B}q_{1}\partial^{B}\rho
-\partial_{A}q_{1}\partial_{B}q_{1}\partial^{A}\partial^{B}\rho,
\end{multline}
integrating (\ref{9}), we see that the first three terms cancel out because of the divergence theorem and the decay conditions. 
The last term vanishes because $\partial^{A}\partial^{B}\rho=0$. Then we get

\begin{equation}
 \label{10}
\int_{\Bbb R_{+}^{2}}|\partial^{2}q_{1}|^{2}\rho d\rho dz=\int_{\Bbb R_{+}^{2}}|\Delta q_{1}|^{2}\rho d\rho dz+
2\int_{\Bbb R_{+}^{2}}\Delta q_{1}\partial_{\rho}q_{1}d\rho dz,
\end{equation}
squaring (\ref{10}) and using the Cauchy inequality, we obtain

\begin{align}
 \label{11}
\left(\int_{\Bbb R_{+}^{2}}|\partial^{2}q_{1}|^{2}\rho d\rho dz\right)^{2}&\leq2\left(\int_{\Bbb R_{+}^{2}}|\Delta q_{1}|^{2}\rho d\rho dz\right)^{2}+
8\left(\int_{\Bbb R_{+}^{2}}\Delta q_{1}\partial_{\rho}q_{1}d\rho dz\right)^{2},\\
\label{11a}
&=2\left(\int_{\Bbb R_{+}^{2}}|\Delta q_{1}|^{2}\rho d\rho dz\right)^{2}+8\left(\int_{\Bbb R_{+}^{2}}\Delta q_{1}\rho^{1/2}\dfrac{\partial_{\rho}q_{1}}{\rho^{1/2}}d\rho dz\right)^{2}\\
\label{11b}
&\leq2\left(\int_{\Bbb R_{+}^{2}}|\Delta q_{1}|^{2}\rho d\rho dz\right)^{2}+8\int_{\Bbb R_{+}^{2}}|\Delta q_{1}|^{2}\rho d\rho dz
\int_{\Bbb R_{+}^{2}}\dfrac{|\partial_{\rho}q_{1}|}{\rho}d\rho dz,\\
\label{11c}
&\leq10C\left(\bar{m}_{2}\right)^{2},
\end{align}
where in the line (\ref{11a}) we have multiplied and divided by $\rho^{1/2}$ the integrand of the second term. The line (\ref{11b}) follows from the Hölder inequality. 
Lastly the inequality (\ref{11c}) is obtained from the bounds (\ref{cota5}) and (\ref{cota6}). Then, taking the positive root of (\ref{11c}) we get (\ref{cota8a}).

\end{proof}

\subsection{Extreme Kerr}
\label{estimacion_kerr}

\begin{theorem}
\label{teorema5}
 For linear gravitational perturbations with axial symmetry over extreme Kerr, the following inequalities hold
\begin{equation}
 \label{cota9}
 \int_{\Bbb R_{+}^{2}}|\Delta q_{1}|^{2}e^{-2(\sigma_{0}+q_{0})}\rho d\rho dz\leq 
 C(\bar{m}_{2}+m_{2})
\end{equation}
and

\begin{equation}
 \label{cota10}
 \int_{\Bbb R_{+}^{2}}|\Delta \beta^{B}_{1}|^{2}e^{2(\sigma_{0}+q_{0})}\rho^{3} d\rho dz\leq Cm_{2},
\end{equation}
where $C>0$ is a numerical constant.
\end{theorem}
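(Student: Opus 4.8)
The plan is to carry over, with the conformal weights dictated by the extreme Kerr background, the two computations that produced (\ref{cota6}) and (\ref{cota7}) in the Minkowski case, feeding the resulting integrals into the conserved masses $m_2,\bar m_2$ and into Theorem~\ref{theorem3}.

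For (\ref{cota9}) I would solve the Hamiltonian constraint (\ref{eq133a}) for the Laplacian of $q_1$: $\Delta q_1 = -\Ldt\sigma_1 - \den_1/(4\rho)$, where $\Ldt$ is the operator of (\ref{eq80a}), i.e.\ the flat three-dimensional Laplacian on axisymmetric functions. Squaring, applying the generalized Cauchy inequality (\ref{eq:desigualdaddecauchy1}) with $n=2$, multiplying by $e^{-2(\sigma_0+q_0)}\rho$ and integrating over $\Rdm$ splits the left-hand side into two pieces. The piece carrying $(\Ldt\sigma_1)^2$ is exactly the $\sigma_1$-term of (\ref{cota2}), hence $\le C(\bar m_2 + m_2)$; this is where $\bar m_2$ enters. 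For the piece carrying $\rho^{-2}\den_1^2$ I would expand $\den_1/\rho$ by (\ref{eq136a}), use (\ref{eq:desigualdaddecauchy1}) and Cauchy--Schwarz to bound $(\den_1/\rho)^2$ by a constant times $|\partial\sigma_0|^2|\partial\sigma_1|^2 + \eta_0^{-4}|\partial\omega_0|^2|\partial\omega_1|^2 + \eta_0^{-4}\sigma_1^2|\partial\omega_0|^4$, and then dispose of the background factors using $e^{-2(\sigma_0+q_0)}|\partial\sigma_0|^2\le C$ from (\ref{eq:combi2}), $e^{-2(\sigma_0+q_0)}|\partial\omega_0|^2 = \eta_0^2\bigl(e^{-2(\sigma_0+q_0)}\eta_0^{-2}|\partial\omega_0|^2\bigr)\le C\eta_0^2$ from (\ref{eq:combi}), and $\eta_0^{-2}|\partial\omega_0|^2\le Cr^{-2}$ from (\ref{eq129}); in the twist term one first rewrites $\eta_0^{-2}\partial_A\omega_1 = \partial_A\bar\omega_1 + 2\eta_0^{-1}(\partial_A\eta_0)\bar\omega_1$. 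What remains in each case is one of the integrands of (\ref{cota1}) --- $|\partial\sigma_1|^2\rho$, $(\eta_0^2|\partial\bar\omega_1|^2 + |\partial\eta_0|^2\bar\omega_1^2)\rho$, $r^{-2}\sigma_1^2\rho$ --- so this piece is $\le Cm_2$, and (\ref{cota9}) follows.

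For (\ref{cota10}) I would take the divergence of the gauge equation (\ref{eq135}): the left-hand side collapses to $\Delta\beta_1^B$ via the identity $\partial_A(\ck\beta_1)^{AB}=\Delta\beta_1^B$ already used in (\ref{5}), while on the right-hand side one differentiates the coefficient $2e^{-2u_0}\rho = 2\rho^{-1}e^{-2(\sigma_0+q_0)}$ and eliminates $\partial_A\chi_1^{AB}$ by the momentum constraint (\ref{eq133}). Multiplying the resulting identity by $\rho$ and using $e^{2u_0}=\rho^2 e^{2(\sigma_0+q_0)}$ gives
\begin{multline*}
 \rho\,\Delta\beta_1^{B} = -2e^{-2(\sigma_0+q_0)}\left(\frac{\partial_A\rho}{\rho}+2\partial_A(\sigma_0+q_0)\right)\chi_1^{AB}\\
 - p\left(2\partial^{B}\rho+\rho\,\partial^{B}\sigma_0\right) - \rho\,\eta_0^{-2}(\partial^{B}\omega_0)\,\po,
\end{multline*}
the exact analogue of (\ref{7}), to which it reduces when $\sigma_0=q_0=\omega_0=0$. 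Squaring, applying (\ref{eq:desigualdaddecauchy1}) with $n=3$, multiplying by $e^{2(\sigma_0+q_0)}\rho$ and integrating, the three resulting terms are controlled --- after Cauchy--Schwarz and the elementary facts $|\partial\rho|=1$ and $\rho\le r$ --- by constant multiples of the three densities of $\den_2$ in (\ref{eq136b}) proportional to $e^{2(\sigma_0+q_0)}p^2\rho$, $e^{2(\sigma_0+q_0)}\eta_0^{-2}\po^2\rho$ and $\rho^{-1}e^{-2(\sigma_0+q_0)}\chi_1^{AB}\chi_{1AB}$, provided the background quantities $\rho|\partial\sigma_0|$, $\rho|\partial(\sigma_0+q_0)|$ and $\rho\,\eta_0^{-1}|\partial\omega_0|$ are bounded; this in turn follows from (\ref{eq129}), (\ref{eq129a}), (\ref{eq129a1}) together with $\rho\le r$ and the decay of the background functions at infinity. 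Integration then gives $\le Cm_2$, i.e.\ (\ref{cota10}).

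The step I expect to be the main obstacle is the weight bookkeeping in (\ref{cota10}): after the Cauchy splitting the only genuinely singular contribution is the $\rho^{-1}$ on the axis produced by $\partial_A\rho$, and one must verify that it is matched \emph{exactly}, with no residual power of $\rho$, by the term $4e^{-2u_0}\chi_1^{AB}\chi_{1AB}\rho = 4\rho^{-1}e^{-2(\sigma_0+q_0)}\chi_1^{AB}\chi_{1AB}$ of $\den_2$; this is precisely what fixes the order of operations ``multiply by $\rho$ first, then by $e^{2(\sigma_0+q_0)}\rho$''. A secondary point, in (\ref{cota9}), is that the twist contributions to $\den_1$ are naturally written through $\omega_1$ while (\ref{cota1}) controls $\bar\omega_1 = \omega_1/\eta_0^{2}$, so one has to pass between the two and reabsorb the extra $\eta_0^{-1}(\partial\eta_0)\bar\omega_1$ term, once more by (\ref{cota1}).
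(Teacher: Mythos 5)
Your proposal is correct and follows essentially the same route as the paper: for (\ref{cota9}) it isolates $\Delta q_1$ from the Hamiltonian constraint (\ref{eq133a}), squares with the Cauchy inequality, weights by $e^{-2(\sigma_0+q_0)}$, and invokes the background bounds together with Theorem \ref{theorem3}; for (\ref{cota10}) it takes the divergence of the gauge equation (\ref{eq135}), substitutes the momentum constraint (\ref{eq133}), and matches the resulting terms against the densities in $\den_2$. Your two ``obstacle'' remarks (the exact matching of the $\rho^{-1}\chi_{1AB}\chi_1^{AB}$ weight, and the passage from $\omega_1$ to $\bar\omega_1$) are precisely the points the paper's proof handles implicitly, so no gap remains.
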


\begin{proof}
Taking the square of (\ref{eq133a}), from the Cauchy inequality we get
\begin{equation}
 \label{dem_kerr_1}
 |\Delta q_{1}|^{2}\leq C\left[|\partial\sigma_0|^{2}|\partial\sigma_1|^{2}+\dfrac{|\partial\omega_0|^{2}}{\eta_{0}^{4}}|\partial\omega_1|^{2}+
 \dfrac{|\partial\omega_0|^{4}}{\eta_{0}^{4}}|\sigma_1|^{2}+(\Ldt\sigma_{1})^{2}\right],
\end{equation}
where we have used the definition of $\den_1$ (\ref{eq136a}).
Multiplying this inequality by the factor $e^{-2(\sigma_{0}+q_{0})}$, it follows from the background inequalities (\ref{eq129})-(\ref{eq:combi2}) that 

\begin{equation}
 \label{dem_kerr_2}
 |\Delta q_{1}|^{2}e^{-2(\sigma_{0}+q_{0})}\leq C\left[|\partial\sigma_1|^{2}+\dfrac{|\partial\omega_1|^{2}}{\eta_{0}^{2}}+
 \dfrac{|\sigma_1|^{2}}{r^{2}}+(\Ldt\sigma_{1})^{2}e^{-2(\sigma_{0}+q_{0})}\right], 
\end{equation}
integrating this on $\Bbb R_{+}^{2}$, we get from the Theorem \ref{theorem3} the inequality (\ref{cota9}).

To obtain the bound (\ref{cota10}), we start by taking the derivative $\partial_{A}$ of the equation (\ref{eq135}) 

\begin{equation}
 \label{dem_kerr_3}
 \Delta\beta_{1}^{B}=-4e^{-(\sigma_{0}+q_{0})}\partial_{A}(\sigma_{0}+q_{0})\dfrac{\chi_{1}^{AB}}{\rho}-
 2p\left(2\dfrac{\partial^{B}\rho}{\rho}+\partial^{B}\sigma_{0}\right)+2\dfrac{\partial^{B}\omega_{0}}{\eta_{0}^{2}}d,
\end{equation}
where we made use of (\ref{eq133}) to rewrite $\partial_{A}\chi^{AB}_{1}$. Squaring (\ref{dem_kerr_3}) and making use of the Cauchy inequality, we see that

\begin{equation}
 \label{dem_kerr_4}
 |\Delta\beta_{1}|^{2}\leq C\left(e^{-4(\sigma_{0}+q_{0})}|\partial(\sigma_{0}+q_{0})|^{2}\dfrac{\chi_{1AB}\chi_{1}^{AB}}{\rho^{2}}
 +\dfrac{p^{2}}{\rho^{2}}+|\partial\sigma_{0}|^{2}p^{2}+\dfrac{|\partial\omega_{0}|^{2}}{\eta_{0}^{4}}d^{2}\right).
\end{equation}
Multiplying by the factor $e^{2(\sigma_{0}+q_{0})}\rho^{3}$ we obtain

\begin{align}
 \label{dem_kerr_5}
 |\Delta\beta_{1}|^{2}e^{2(\sigma_{0}+q_{0})}\rho^{3}\leq C\left(e^{-2(\sigma_{0}+q_{0})}|\partial(\sigma_{0}
 +q_{0})|^{2}\chi_{1AB}\chi_{1}^{AB}\rho\right.\\
 \left.+e^{2(\sigma_{0}+q_{0})}\rho p^{2}+e^{2(\sigma_{0}+q_{0})}\rho^{3}|\partial\sigma_{0}|^{2}p^{2}+
 e^{2(\sigma_{0}+q_{0})}\rho^{3}\dfrac{|\partial\omega_{0}|^{2}}{\eta_{0}^{4}}d^{2}\right),
\end{align}
from the background properties (\ref{eq129})-(\ref{eq:combi2}) we see that

\begin{equation}
 \label{dem_kerr_6}
 |\Delta\beta_{1}|^{2}\leq C\left(\dfrac{e^{-2(\sigma_{0}+q_{0})}}{\rho}\chi_{1AB}\chi_{1}^{AB}+
 2e^{2(\sigma_{0}+q_{0})}\rho p^{2}+\dfrac{e^{2(\sigma_{0}+q_{0})}\rho}{\eta_{0}^{2}}d^{2}\right)
\end{equation}
so using (\ref{eq136c2}) we get the sought inequality

\begin{equation}
 \int_{\Bbb R_{+}^{2}}|\Delta \beta^{B}_{1}|^{2}e^{2(\sigma_{0}+q_{0})}\rho^{3} d\rho dz\leq Cm_{2}.
\end{equation}
\end{proof}

\textbf{Remark.} From equation (\ref{eq135}) we can obtain the following estimate for the first derivatives of the shift vector 
\begin{equation}
 \label{cota 8}
 \int_{\Bbb R_{+}^{2}}\left[(\partial^{\rho}\beta^{\rho}-\partial^{z}\beta^{z})^{2}
 +(\partial^{\rho}\beta^{z}+\partial^{z}\beta^{\rho})^{2}\right]\rho d\rho dz\leq Cm_{2}. 
\end{equation}
This estimate is not strong enough to be included in the theorem because it does not give a bound for the first derivatives of $\beta_{1}$ 
in a separately form, but we will discuss it in the conclusions. The same estimate can be obtained for Minkowski.

\section{Conclusions and Comments}
\label{conclusiones}

In the present work we found integral estimates for the derivatives of the shift and the induced metric in the maximal isothermal gauge for linear gravitational perturbations 
with axial symmetry in Minkowski and extreme Kerr. 

The results are similar to the ones obtained in \cite{dain2015bounds}, and in a certain way follow from those. An important difference is 
that the estimate (\ref{cota10}) does not control the Laplacian $\beta_{1}$ on the axis of symmetry ($\rho=0$) because the factor 
$e^{2(\sigma_{0}+q_{0})}\rho^{3}$ vanishes there, but it controls it on the horizon ($r=0$) where this factor becomes infinite. This does not mean that we 
have control over $\beta_{1}$ in the horizon, for this we would need to be able to control the first derivatives of $\beta_{1}$.
In this respect we can say that the equation (\ref{cota 8}) 
gives information about the behavior of the first derivatives but not each of them separately ($|\partial\beta_{1}|^{2}$) as we would need to try to control
$\beta_{1}$. Obtaining an estimate for $\beta_{1}$ is crucial to obtain bounds for $\dot{\sigma}_{1}$ and $\dot{\omega}_{1}$.

Unlike in Minkowski, in extreme Kerr we did not obtain estimates for the first derivatives of $q_{1}$ because in this case the evolution equation for the second 
fundamental form is much more complicated.

Using $q_{1}$ and $\beta_{1}$ we lose control in the sense that we only obtain integral estimates for the Laplacian with some weight factor 
($e^{-2(\sigma_{0}+q_{0})}\rho$ for $\Delta q_{1}$ and $e^{2(\sigma_{0}+q_{0})}\rho^{3}$ for $\Delta\beta_{1}$), but we gain control over the horizon for the Laplacian of
$\beta_{1}$, which is a progress towards the inclusion of the horizon in the bounds, likewise the estimate (\ref{cota 8}) also includes the horizon. 
The difficulty with respect to (\ref{cota 8}) is that it does not allow the use of usual estimates of the Sobolev type (see \cite{Evans98}) due to the form in which 
the derivatives appear in the integrands and it can not be combined with (\ref{cota10}) to give estimates of all the second derivatives of $\beta_{1}$ 
as was obtained for Minkowski with respect to $q_{1}$.

\bibliographystyle{plain}
\bibliography{biblio}

\end{document}